\newtheorem{theorem}{Theorem}
\newtheorem{lemma}{Lemma}
\newtheorem{corollary}{Corollary}
\newcommand{\prefix}{\textrm{prefix}}
\newcommand{\suffix}{\textrm{suffix}}
\newcommand{\overlap}{\textrm{overlap}}
\DeclareMathOperator{\operatorClassNP}{NP}
\newcommand{\classNP}{\ensuremath{\operatorClassNP}}
\DeclareMathOperator{\operatorClassCoNP}{coNP}
\newcommand{\classCoNP}{\ensuremath{\operatorClassCoNP}}
\DeclareMathOperator{\operatorClassFPT}{FPT}
\newcommand{\classFPT}{\ensuremath{\operatorClassFPT}}
\newcommand{\defproblemu}[3]{
  \vspace{1mm}
\noindent\fbox{
  \begin{minipage}{0.95\textwidth}
  #1 \\
  {\bf{Input:}} #2  \\
  {\bf{Question:}} #3
  \end{minipage}
  }
  \vspace{1mm}
}
\begin{document}

\title{Parameterized Complexity of Superstring Problems\thanks{The research leading to these results has received funding from the 
Government of the Russian Federation (grant 14.Z50.31.0030).}}

\author{
Ivan Bliznets\thanks{St.~Petersburg Department of Steklov Institute of Mathematics of the Russian Academy of Sciences} \addtocounter{footnote}{-1}
\and
Fedor V. Fomin \footnotemark {} \thanks{Department of Informatics, University of Bergen, Norway}  \addtocounter{footnote}{-2} 
\and
Petr A. Golovach \footnotemark {}  \footnotemark \addtocounter{footnote}{-2}
\and
Nikolay Karpov \footnotemark \addtocounter{footnote}{-1}
\and
Alexander S. Kulikov \footnotemark 
\and
Saket Saurabh \footnotemark {} \thanks{Institute of Mathematical Sciences, Chennai, India}
}

\maketitle
\sloppy
\begin{abstract}
In the  \textsc{Shortest Superstring} problem we are given a set of  strings $S=\{s_1, \ldots, s_n\}$  and integer $\ell$ and the question is to decide whether there is a superstring $s$ of length at most  $\ell$ containing all strings of $S$ as substrings. We obtain several parameterized algorithms and complexity results for this problem. 

In particular, we give an algorithm which in time $2^{O(k)} \operatorname{poly}(n)$ finds a superstring of length at most $\ell$ containing at least $k$ strings of $S$. We  complement this by the lower bound showing that such a  parameterization does not admit a polynomial kernel up to some complexity assumption. We also obtain several results about ``below guaranteed values" parameterization of the problem. We show that parameterization by compression admits a polynomial   kernel while parameterization ``below matching" is hard.  
\end{abstract}

\section{Introduction}
We consider the \textsc{Shortest Superstring} problem defined as follows:\\

\defproblemu{\textsc{Shortest Superstring}}{A set of $n$ strings $S=\{s_1, \ldots, s_n\}$ over an alphabet $\Sigma$ and
a non-negative integer~$\ell$.}
{Is there a string $s$ of length at most $\ell$ containing all strings from $S$ as substrings?}\\
This is a well-known NP-complete problem \cite{GareyJ79} with a range of practical  applications from 
DNA assembly \cite{EvansW11} till  data compression \cite{GallantMS80}. Due to this fact approximation algorithms
for it are widely studied. The currently best known approximation guarantee $2\frac{11}{23}$ is
due to Mucha~\cite{mucha}. At the same time the best known exact algorithms run in roughly 
$2^n$ steps
and are known for more than 50 years already. More precisely, using known algorithms for the \textsc{Traveling Salesman} problem, \textsc{Shortest Superstring} can be solved either in time 
$O^*(2^n)$
and the same space
by dynamic programming over subsets \cite{bellman,held} or in time $O^*(2^n)$ and only
polynomial space
by inclusion-exclusion~\cite{karp1982dynamic,kohn1977}
(here, $O^*(\cdot)$ hides factors that are polynomial in the input length, i.e., $\sum_{i=1}^n|s_i|$).
Such algorithms can only be used in practice to solve instances of very moderate size. Stronger upper bounds are known for a special case when input strings have bounded 
length~\cite{golovnev2013solving,golovnev2014solving}. There are
  heuristic methods for solving 
\textsc{Traveling Salesman}, and hence also \textsc{Shortest Superstring}, 
they are efficient in practice, however  have no efficient provable guarantee on the running time 
(see, e.g., \cite{concordeTSP}).

In this paper, we study the \textsc{Shortest Superstring} problem from the parameterized complexity point of view.
This field studies the complexity of computational problems with respect not only to input size,
but also to some additional parameters and tries to identify parameters of input instances that make
the problem tractable. 
Interestingly, prior to our work, except observations following from the known reductions to \textsc{Traveling Salesman}, not much about the parameterized complexity of  \textsc{Shortest Superstring} was known. We refer to the  survey of  Bulteau  et al. \cite{BulteauHKN14} for a nice overview of known results on parameterized algorithms and complexity of strings problems. Thus our work can be seen as the first non-trivial step towards the study of this interesting and important   problem from the perspective of parameterized complexity.

\paragraph{Our results}
In this paper we study two types of parameterization for \textsc{Shortest Superstring} and present two kind of results. 
The first set of results concerns ``natural" parameterization of the problem. 
We consider the following  generalization
of \textsc{Shortest Superstring}:

\medskip
\defproblemu{\textsc{Partial Superstring}}{A collection (multiset) of strings $S$ over an alphabet $\Sigma$,  and   non-negative integers $k,\ell$.}
{Is there a string $s$ of length at most $\ell$ such that $s$ is a superstring of a collection of at least $k$ strings $S'\subseteq S$?}\\
If  $k=|S|$, then this is  \textsc{Shortest Superstring}.
Notice that $S$ can contain copies of the same string and a string of $S$ can be a substring of another string of the collection. For \textsc{Shortest Superstring}, such cases could be easily avoided, but for \textsc{Partial Superstring} it is natural to assume that we have such possibilities.

Here we show that \textsc{Partial Superstring} is fixed parameter tractable (FPT) 
when parameterized by $k$ or~$\ell$. We complement this result by showing that
it is unlikely that the problem admits a polynomial kernel with respect to these parameters.

\medskip

The second set of results concerns ``below guaranteed value" parameterization. 
 Note that
an obvious (non-optimal) superstring  of $S=\{s_1, \ldots, s_n\}$ is a string of length 
 $\sum_{i=1}^n|s_i|$ formed by 
concatenating  all  strings from $S$. 
For a superstring $s$ of $S$ the value $\sum_{i=1}^{n}|s_i|-|s|$ is called by \emph{compression
of $s$ with respect to~$S$}. Then finding a shortest superstring is equivalent to
  finding
an order of $s_1, \ldots, s_n$ such that the consecutive strings have the largest possible total overlap.
We first show that it is FPT with respect to $r$ to check whether one can achieve a compression at least $r$ by construction a kernel of size $O(r^4)$. 
We complement this result by a hardness result about ``stronger" parameterization.
Let us partition $n$ input strings into $n/2$ pairs such that the sum of the $n/2$ resulting overlaps is maximized.
Such a partition can be found in polynomial time by constructing a maximum weight matching in an auxiliary graph.
Then this total overlap provides a lower bound on the maximum compression (or, equivalently, an upper bound
on the length of a shortest superstring). We show that already deciding whether at least one additional
symbol can be saved beyond the maximum weight matching value is already NP-complete. 

\section{Basic definitions and preliminaries}\label{sec:defs}
{\bf Strings.} Let $s$ be a string. By $|s|$ is denoted the \emph{length} of $s$. By $s[i]$, where $1\leq i\leq |s|$, is denoted the $i$-th symbol of $s$, and $s[i,j]=s[i]\ldots s[j]$ for $1\leq i\leq j\leq |s|$. We assume that $s[i,j]$ is the empty string if $i>j$.
We denote $\prefix_i(s)=s[1,i]$ and $\suffix_i(s)=s[|s|-i+1,|s|]$  the \emph{$i$-th prefix} and \emph{$i$-th suffix} of $s$ respectively for $i\in\{1,\ldots,|s|\}$;
$\prefix_0(s)=\suffix_0(s)$ is the empty string.
Let $s,s'$ be strings. We write $s\subseteq s'$ to denote that $s$ is a \emph{substring} of $s'$. If $s\subseteq s'$, then $s'$ is a \emph{superstring} of $s$.
We write $s\subset s'$ and $s\supset s'$ to denote proper sub and superstrings.
For a collection of strings $S$, a string $s$ is a superstring of $S$ if $s$ is a superstring of each string in $S$.
The \emph{compression measure} of a superstring $s$ of a collection of strings $S$ is $\sum_{x\in S}|x|-|s|$.
 If $s\subseteq s'$, then $\overlap(s,s')=\overlap(s',s)=s$; otherwise, if $s\not\subseteq s'$ and $s'\not\subseteq s$, then $\overlap(s,s')=\suffix_r(s)=\prefix_r(s')$, where $r=\max\{i\mid 0\leq i\leq\min\{|s|,|s'|\}, \suffix_i(s)=\prefix_i(s')\}$. We denote by $ss'$ the \emph{concatenation} of $s$ and $s'$. 
For strings $s,s'$, we define the \emph{concatenation with overlap} $s\circ s'$ as follows. If $s\subseteq s'$, then $s\circ s'=s'\circ s=s'$. If $s\not\subseteq s'$ and $s'\not\subseteq s$, then $s\circ s'=\prefix_p(s)\overlap(s,s')\suffix_q(s')$, where $p=|s|-|\overlap(s,s')|$ and $q=|s'|-|\overlap(s,s')|$.

We need the following folklore property  of superstrings.

\begin{lemma}\label{lem:overlap}
Let $s$ be a superstring of a collection  $S$ of  strings. 
Let $S'=\{s_1,\ldots,s_n\}$ be a set of inclusion maximal pairwise distinct strings of $S$ such that each string of $S$ is a substring of a string from $S'$.
Let also  $s_i=s[p_i,q_i]$ for $i\in\{1,\ldots,n\}$ and assume that $p_1<\cdots<p_n$.
Then $s'=s_1\circ\cdots\circ s_n$ is a superstring of $S$ of length at most $|s|$. 
\end{lemma}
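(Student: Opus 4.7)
The plan is to establish two facts about $s' = s_1 \circ \cdots \circ s_n$: that it is a superstring of $S$, and that $|s'| \leq |s|$. Writing $t_i := s_1 \circ \cdots \circ s_i$, the first fact follows by an immediate induction: from the definition of $\circ$, both arguments are always substrings of the output, so $t_n = s'$ contains every $s_j \in S'$ as a substring. Since by the choice of $S'$ every string in $S$ is a substring of some $s_j$, this yields that $s'$ is a superstring of all of $S$.

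For the length bound I would prove the sharper inequality $|t_i| \leq q_i - p_1 + 1$ by induction on $i$; the conclusion $|s'| \leq |s|$ follows on setting $i = n$ and using $q_n \leq |s|$ and $p_1 \geq 1$. Before starting the induction I record two structural observations: inclusion-maximality and distinctness of the $s_j$ combined with $p_1 < \cdots < p_n$ force $q_1 < \cdots < q_n$ (otherwise $s_{j+1}$ would be a substring of $s_j$), and the degenerate branch $t_{i-1} \subseteq s_i$ of the $\circ$ operation is impossible for $i \geq 2$, because $s_1$ is always a prefix of $t_{i-1}$ and would thereby embed into $s_i$, contradicting maximality. The inductive step splits on how $s_{i-1}$ and $s_i$ sit inside $s$: if $p_i > q_{i-1}$, the crude bound $|t_i| \leq |t_{i-1}| + |s_i|$ combined with $|t_{i-1}| \leq q_{i-1} - p_1 + 1$ and $q_{i-1} \leq p_i - 1$ already delivers $|t_i| \leq q_i - p_1 + 1$; if $p_i \leq q_{i-1}$, the common block $s[p_i, q_{i-1}]$ is simultaneously a suffix of $s_{i-1}$ and a prefix of $s_i$, so provided $s_{i-1}$ appears as a suffix of $t_{i-1}$, one has $|\overlap(t_{i-1}, s_i)| \geq q_{i-1} - p_i + 1$, and the identity $|t_i| = |t_{i-1}| + |s_i| - |\overlap(t_{i-1}, s_i)|$ collapses to exactly the target bound.

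The main obstacle is sustaining the auxiliary invariant that $s_{i-1}$ is a suffix of $t_{i-1}$, because the remaining degenerate branch $s_i \subseteq t_{i-1}$ can genuinely occur: $s_i$ may appear inside $t_{i-1}$ across the junction between two earlier strings, in which case $t_i = t_{i-1}$ and $s_i$ fails to be the trailing substring of $t_i$. I would resolve this by strengthening the induction hypothesis to track the largest index $j^{\ast} \leq i$ at which the non-degenerate branch of $\circ$ was taken; by construction $s_{j^{\ast}}$ is then a suffix of $t_i$. In the subsequent non-degenerate step, the overlap estimate is recovered by substituting $s_{j^{\ast}}$ for $s_{i-1}$ and $q_{j^{\ast}}$ for $q_{i-1}$ throughout, and the arithmetic still closes because a degenerate step keeps $|t_i| = |t_{i-1}|$ while $q_i - p_1 + 1$ strictly grows, so the slack $(q_i - p_1 + 1) - |t_i|$ only increases. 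With this bookkeeping in place, the case analysis completes and the lemma follows.
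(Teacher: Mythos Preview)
The paper does not supply a proof of this lemma; it is simply cited as a folklore property of superstrings. Your argument is correct and is the natural one.

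A few remarks on the write-up. The superstring claim is fine: both operands of $\circ$ are always substrings of the output, and a one-line induction then shows every $s_j$ sits inside $t_n$. For the length bound, your strengthened invariant is exactly what is required, and the verification goes through as you outline: with $j^\ast = j^\ast(i)$ the last index at which the non-degenerate branch fired (taking $j^\ast(1)=1$), one maintains simultaneously that $s_{j^\ast}$ is a suffix of $t_i$ and that $|t_i| \le q_{j^\ast} - p_1 + 1$; a degenerate step leaves both intact, while a non-degenerate step at $i$ resets $j^\ast$ to $i$ and the two sub-cases $p_i > q_{j^\ast}$ and $p_i \le q_{j^\ast}$ each yield $|t_i| \le q_i - p_1 + 1$ just as in your Case~A/Case~B computation. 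The only expository suggestion is to build this strengthened hypothesis into the induction from the start rather than presenting it as a patch: once degenerate steps can intervene, the relevant case split is on $p_i$ versus $q_{j^\ast}$, not $p_i$ versus $q_{i-1}$, and the argument reads more cleanly if that is made explicit up front. The auxiliary fact that $s_1$ is always a prefix of $t_{i-1}$ (used to exclude $t_{i-1} \subseteq s_i$) is itself an easy simultaneous induction with the exclusion of that branch, so you may want to state it that way.
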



\medskip
\noindent
{\bf Graphs.}
We consider finite directed and undirected graphs without loops or multiple
edges. The vertex set of a (directed) graph $G$ is denoted by $V(G)$, 
the edge set of an undirected graph and the arc set of a directed graph $G$ is denoted by $E(G)$.
To distinguish edges and arcs, the edge with two end-vertices $u,v$  is denoted by $\{u,v\}$, and we write $(u,v)$ for the corresponding arc.
For an arc $e=(u,v)$, $v$ is the \emph{head} of $e$ and $u$ is the tail. 
%
Let 
$G$ be a directed graph.
For a vertex $v\in V(G)$, we say that $u$ is an \emph{in-neighbor} of $v$ if $(u,v)\in E(G)$. The set of all in-neighbors of $v$ is denoted by $N_G^-(v)$. The \emph{in-degree} $d_G^-(v)=|N_G^-(v)|$.
Respectively, $u$ is an  \emph{out-neighbor} of $v$ if $(v,u)\in E(G)$, the set of all out-neighbors of $v$ is denoted by $N_G^+(v)$, and the \emph{out-degree} $d_G^+(v)=|N_G^+(v)|$.
For a directed graph $G$, a (directed) \emph{trail} of \emph{length} $k$ is a sequence $v_0,e_1,v_1,e_2,\ldots,e_k,v_k$ of vertices and arcs  of $G$ such that $v_0,\ldots,v_k\in V(G)$, $e_1,\ldots,e_k\in E(G)$,
the arcs  $e_1,\ldots,e_k$ are pairwise distinct, and 
for $i\in\{1,\ldots,k\}$, $e_i=(v_{i-1},v_i)$. 
We omit the word ``directed'' if it does not create a confusion. Slightly abusing notations we often write a trail as a sequence of its vertices $v_0,\ldots,v_k$ or arcs $e_1,\ldots,e_k$.
If $v_0,\ldots,v_k$ are pairwise distinct, then $v_0,\ldots,v_k$ is a (directed) path. Recall that a path of length $|V(G)|-1$ is a \emph{Hamiltonian} path.
For an undirected graph $G$, 
a set $U\subseteq V(G)$ is a \emph{vertex cover} of $G$ if for any edge $\{u,v\}$ of $G$, $u\in U$ or $v\in U$. A set of edges $M$ with pairwise distinct end-vertices is a \emph{matching}.

We consider the following auxiliary problem:\\
\defproblemu{\textsc{Long Trail}}{A directed graph $G$ and a non-negative integer $\ell$.}
{Is there a trail of length at least $\ell$ in $G$?}

\begin{lemma}\label{lem:trail}
\textsc{Long Trail} is \classNP-complete. In particular, the problem is \classNP-complete if $\ell=|V(G)|-~1$.
\end{lemma}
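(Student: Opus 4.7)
The plan is to prove the lemma by a polynomial-time reduction from \textsc{Directed Hamiltonian Path}, a classical \classNP-complete problem. Membership in \classNP\ is clear: a trail has at most $|E(G)|$ arcs and its defining properties (consecutive arcs share endpoints, arcs pairwise distinct) can be verified in polynomial time.

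Given an instance $H$ of \textsc{Directed Hamiltonian Path} on $n$ vertices, I would construct a digraph $G$ by splitting each $v\in V(H)$ into two vertices $v^-,v^+$ joined by a \emph{gadget arc} $(v^-,v^+)$, and replacing each arc $(u,v)\in E(H)$ by a \emph{connector arc} $(u^+,v^-)$. Then $|V(G)|=2n$, and I would show that $H$ has a Hamiltonian path iff $G$ has a trail of length at least $2n-1=|V(G)|-1$. The forward direction lifts a Hamiltonian path $v_1,\ldots,v_n$ of $H$ to the trail $v_1^-,v_1^+,v_2^-,v_2^+,\ldots,v_n^-,v_n^+$ in $G$ of length exactly $2n-1$.

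For the converse, I would exploit that in $G$ each $v^-$ has out-degree one and each $v^+$ has in-degree one (both equal to the gadget $(v^-,v^+)$), and that gadgets go $^-\!\to^+$ while connectors go $^+\!\to^-$, so every trail alternates between $^-$ and $^+$ vertices. Consequently, a trail of length $2n-1$ uses either $n$ gadgets and $n-1$ connectors (if it starts at some $v^-$) or $n-1$ gadgets and $n$ connectors (if it starts at some $v^+$). In the first case every gadget arc is used, and the unique-arc constraints at each $v^\pm$ force every vertex of $G$ to appear exactly once, so the trail is a Hamiltonian path of $G$ that collapses directly to one of $H$. In the second case exactly one gadget $(w^-,w^+)$ is unused, and a careful counting of $^+$ and $^-$ visits shows that the trail projects to a length-$n$ walk $v=x_0,x_1,\ldots,x_{n-1},x_n=w$ in $H$ in which the starting vertex $v$ is revisited at some position $i\ge 2$ (the absence of self-loops in $H$ rules out $i=1$, and the case $v=w$ corresponds to a closed Hamiltonian tour); the suffix $x_1,\ldots,x_n$ then consists of $n$ distinct vertices of $H$, uses only arcs already present in the walk, and is therefore a Hamiltonian path of $H$.

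The main obstacle is precisely the second case: a trail of length $2n-1$ in $G$ need not be a simple path, so the proof must certify a Hamiltonian path of $H$ even from a walk that revisits some vertex. The key insight I plan to rely on is that the single missing gadget pins down both the endpoints of the trail in $G$ and exactly where in $H$ the revisit can occur; this is tight enough that dropping the first vertex of the induced walk rescues a genuine Hamiltonian path. Bookkeeping the degree and alternation constraints that make this drop-the-first-vertex argument work is the technical heart of the proof.
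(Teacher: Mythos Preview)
Your reduction has a genuine gap in the second case. The claim that ``the single missing gadget pins down both the endpoints of the trail'' and hence $x_n=w$ is false. Concretely, take $H$ on four vertices $\{a,b,c,w\}$ with arcs $(a,b),(b,a),(a,c),(c,b)$ and $w$ isolated; then $H$ has no Hamiltonian path, yet in your split graph $G$ the sequence
\[
a^+,\; b^-,\; b^+,\; a^-,\; a^+,\; c^-,\; c^+,\; b^-
\]
is a trail of length $7=2\cdot 4-1$ that uses all gadgets except $(w^-,w^+)$ while starting at $a^+$ and ending at $b^-$. The point is that a vertex $v^-$ has out-degree~$1$ but unbounded in-degree, so it may be \emph{entered twice} (once leaving via its gadget, once as the terminal vertex); symmetrically $v^+$ may be \emph{left twice}. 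Thus both endpoints of a length-$(2n-1)$ trail can lie among the $n-1$ vertices whose gadgets are used, the missing vertex $w$ never appears in the projected walk, and your ``drop the first vertex'' rescue does not yield $n$ distinct vertices.

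The paper's proof avoids exactly this obstruction by adjoining a universal source $s$ and sink $t$ (with arcs $(s,v^-)$ and $(v^+,t)$ for every $v$) and asking for a trail of length $2n+1$. One then argues that any such trail can be normalized to start at $s$ and end at $t$; since $d^-(s)=d^+(t)=0$ and $d^+(v^-)=d^-(v^+)=1$, every vertex is visited exactly once and the trail is forced to be a genuine Hamiltonian path of the gadget graph. Without the extra source/sink (or some equivalent device) the backward direction of your reduction fails.
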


\begin{proof}
We reduce the \textsc{Hamiltonian Path} problem for directed graphs that is well known to be \classNP-complete (see, e.g., \cite{GareyJ79}). Let $G$ be a directed graph with $n$ vertices. We construct the graph $G'$ as follows.
\begin{itemize}
\item For each $v\in V(G)$, construct two vertices $v^-,v^+$ and an arc $(v^-,v^+)$.
\item For each $(u,v)\in E(G)$, construct an arc $(u^+,v^-)$.
\item Construct two vertices $s,t$ and for each $v\in V(G)$, construct arcs $(s,v^-),(v^+,t)$. 
\end{itemize}
We claim that $G'$ has a trail of length at least $2n+1=|V(G')|-1$ if and only if $G$ has a Hamiltonian path.

Suppose that $G$ has a Hamiltonian path $v_1,\ldots,v_n$. Then the trail $s,v_0^-,v_0^+v_1^-v_1^+,\ldots,v_n^-,v_n^+,t$ in $G'$ has length $2n+1$.

Assume that $G'$ has a trail $P$ of 
length at least $2n+1$. Without loss of generality we can assume that $s$ is the first vertex of $P$ and $t$ is the last. To see it, suppose that $x\neq s$ is the first vertex of $P$. Notice that 
$s$ is not in $P$, because $d_{G'}^-(s)=0$. If $x=v^-$ for $v\in V(G)$, then  we can consider the extended trail $s,(s,x),P$. If  $x=v^+$ for $v\in V(G)$, then let $u^-$ be the next vertex in $P$ after $x$. We consider the path $P'$ obtained from $P$ by the replacement of $x$ and $(x,u^-)$ by $s$ and $(s,u^-)$ respectively. Clearly, $P'$ has the same length as $P$. By the symmetric arguments, we obtain that we can assume that $t$ is the last vertex of $P$. We have that any vertex of $G'$ occurs exactly once in $P$, because
$d_{G'}^-(s)=d_{G'}^+(t)=0$ and $d_{G'}^+(v^-)=d_{G'}^-(v^+)=1$ for $v\in V(G)$.    
Moreover,  for each vertex $v\in V(G)$, $(v^-,v^+)$ in $P$, because $v^-$ is the unique in-neighbor of $v^+$ and $v^+$ is the unique out-neighbor of $v^-$ respectively for $v\in V(G)$. Hence, $P$ can be written as $s,v_0^-,v_0^+v_1^-v_1^+,\ldots,v_n^-,v_n^+,t$ for $v_1,\ldots,v_n\in V(G)$. It remains to observe that $v_1,\ldots,v_n$ is a Hamiltonian path in $G$.
\end{proof}

\smallskip
\noindent
{\bf Parameterized Complexity.}
Parameterized complexity is a two dimensional framework
for studying the computational complexity of a problem. One dimension is the input size
and another one is a parameter. We refer to the books of Downey and Fellows~\cite{DowneyF13},
Flum and Grohe~\cite{FlumG06}, and   Niedermeier~\cite{Niedermeierbook06} for  detailed introductions  to parameterized complexity. 

Formally, a parameterized problem $\mathcal{P}\subseteq\Sigma^*\times\mathbb{N}$, where $\Sigma$ is a finite alphabet, i.e., an instance of $\mathcal{P}$ is a pair $(I,k)$ for $I\in \Sigma^*$ and $k\in\mathbb{N}$, where $I$ is an input and $k$ is a parameter.
It is said that a problem is \emph{fixed parameter tractable} (or \classFPT), if it can be solved in time $f(k)\cdot |I|^{O(1)}$ for some function $f$. 
A \emph{kernelization} for a parameterized problem is a polynomial algorithm that maps each instance $(I,k)$  to an instance $(I',k')$ such that 
\begin{itemize}
\item[i)] $(I,k)$ is a yes-instance if and only if $(I',k')$ is a yes-instance of the problem, and
\item[ii)] the size of $I'$ and $k'$ are bounded by $f(k)$ for a computable function $f$. 
\end{itemize}
The output $(I',k')$ is called a \emph{kernel}. The function $f$ is said to be a \emph{size} of a kernel. Respectively, a kernel is \emph{polynomial} if $f$ is polynomial. 
While a parameterized problem is \classFPT{} if and only if it has a kernel, it is widely believed that not all \classFPT{} problems have polynomial kernels.

 In particular, Bodlaender, Jansen and Kratsch~\cite{BodlaenderJK14} introduced techniques that allow to show that a parameterized problem has no polynomial kernel unless  $\classNP\subseteq\classCoNP/\text{\rm poly}$.

Let $\Sigma$ be a finite alphabet. An equivalence relation $\mathcal{R}$ on the set of strings $\Sigma^*$ is called a \emph{polynomial equivalence relation} if the following two conditions hold:
\begin{itemize}
\item[i)] there is an algorithm that given two strings $x,y\in\Sigma^*$ decides whether $x$ and $y$ belong to
the same equivalence class in time polynomial in $|x|+|y|$,
\item[ii)] for any finite set $S\subseteq\Sigma^*$, the equivalence relation $\mathcal{R}$ partitions the elements of $S$ into a
number of classes that is polynomially bounded in the size of the largest element of $S$.
\end{itemize}

Let $L\subseteq\Sigma^*$ be a language, let $\mathcal{R}$ be a polynomial
equivalence relation on $\Sigma^*$, and let $\mathcal{P}\subseteq\Sigma^*\times\mathbb{N}$   
be a parameterized problem.  An \emph{OR-cross-composition of $L$ into $\mathcal{P}$} (with respect to $\mathcal{R}$) is an algorithm that, given $t$ instances $x_1,x_2,\ldots,x_t\in\Sigma^*$ 
of $L$ belonging to the same equivalence class of $\mathcal{R}$, takes time polynomial in
$\sum_{i=1}^t|x_i|$ and outputs an instance $(y,k)\in \Sigma^*\times \mathbb{N}$ such that:
\begin{itemize}
\item[i)] the parameter value $k$ is polynomially bounded in $\max\{|x_1|,\ldots,|x_t|\} + \log t$,
\item[ii)] the instance $(y,k)$ is a yes-instance for $\mathcal{P}$ if and only if at least one instance $x_i$ is a yes-instance for $L$ and $i\in\{1,\ldots,t\}$.
\end{itemize}
It is said that $L$ \emph{OR-cross-composes into} $\mathcal{P}$ if a cross-composition
algorithm exists for a suitable relation $\mathcal{R}$.

In particular, Bodlaender, Jansen and Kratsch~\cite{BodlaenderJK14} proved the following theorem.
\begin{theorem}[\cite{BodlaenderJK14}]\label{thm:BJK}
If an \classNP-hard language $L$ OR-cross-composes into the parameterized problem $\mathcal{P}$,
then $\mathcal{P}$ does not admit a polynomial kernelization unless
$\classNP\subseteq\classCoNP/\text{\rm poly}$.
\end{theorem}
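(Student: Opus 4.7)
The plan is to derive the statement from the OR-distillation framework of Fortnow and Santhanam, following the standard Bodlaender--Jansen--Kratsch template. Suppose toward a contradiction that $\mathcal{P}$ admits a polynomial kernelization, i.e., a polynomial-time algorithm $K$ that, given an instance $(y,k)$, outputs an equivalent instance of size at most $p(k)$ for some polynomial $p$. The idea is to chain an OR-cross-composition with $K$ in order to manufacture an OR-distillation from $L$; by Fortnow--Santhanam such a distillation places $L$ inside $\classCoNP/\text{poly}$, and then \classNP-hardness of $L$ under polynomial-time many-one reductions upgrades the inclusion to $\classNP\subseteq\classCoNP/\text{poly}$.

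First I would handle the bookkeeping of the equivalence classes. Given $t$ arbitrary instances $x_1,\ldots,x_t$ of $L$ with $n=\max_i|x_i|$, the polynomial-time decidability of $\mathcal{R}$ allows us to sort them into their $\mathcal{R}$-classes, and axiom (ii) of polynomial equivalence bounds the number of classes restricted to inputs of size at most $n$ by some polynomial $q(n)$. The global OR of $x_1,\ldots,x_t$ is therefore the OR of at most $q(n)$ sub-ORs, each internal to a single equivalence class. Processing each class separately and combining the outcomes by a cheap outer OR preserves the final size bounds, so we may assume without loss of generality that all $x_i$ belong to one $\mathcal{R}$-class.

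Next I would apply the cross-composition to this class to obtain, in polynomial time, an instance $(y,k)$ of $\mathcal{P}$ with $k\leq (n+\log t)^{O(1)}$ and $(y,k)$ a yes-instance iff some $x_i$ is a yes-instance of $L$. Feeding $(y,k)$ into the hypothetical kernel $K$ yields an equivalent instance of $\mathcal{P}$ of total bit-size at most $p(k)=(n+\log t)^{O(1)}$. Viewing $\mathcal{P}$ as an \classNP{} language (which we may, since it has a polynomial kernel and \classNP{}-verification of the kernel yields \classNP{}-membership), the composition is exactly an OR-distillation of $L$ into $\mathcal{P}$: a polynomial-time map from $t$ inputs of size at most $n$ to a single output of size polynomial in $n+\log t$ whose membership in $\mathcal{P}$ equals the OR of membership of the inputs in $L$.

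The main obstacle, and the step I would invoke as a black box, is the complexity-theoretic consequence of such a distillation. By the theorem of Fortnow and Santhanam (in the refined form used by BJK), an OR-distillation of an \classNP-hard language into any language in \classNP, with output size polynomial in $n+\log t$, forces $\classNP\subseteq\classCoNP/\text{poly}$. Choosing $t$ sufficiently large, for instance $t$ exponential in $n$, keeps $(n+\log t)^{O(1)}$ well below $t\cdot n^{O(1)}$, which is precisely the regime in which the Fortnow--Santhanam lower bound bites, yielding the required collapse and contradicting our assumption of a polynomial kernel.
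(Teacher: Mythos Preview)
The paper does not contain its own proof of this theorem; it is quoted from Bodlaender, Jansen and Kratsch and used as a black-box tool in the kernel lower bound that follows. Your sketch is a faithful outline of the standard argument from that source: partition the inputs by the polynomial equivalence relation, apply the OR-cross-composition within each class, feed the result through the assumed polynomial kernel, and observe that the composite map is an OR-distillation of $L$ with output size $(n+\log t)^{O(1)}$; then invoke the Fortnow--Santhanam theorem to obtain $\classNP\subseteq\classCoNP/\text{poly}$.

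One minor correction: the Fortnow--Santhanam result (and its refinement used by BJK) does not require the target language of the distillation to lie in \classNP---the target may be an arbitrary language. Hence your parenthetical justification that $\mathcal{P}$ is in \classNP{} ``since it has a polynomial kernel and \classNP-verification of the kernel yields \classNP-membership'' is both unnecessary and not a valid inference on its own: possessing a polynomial kernel does not by itself place a problem in \classNP. Dropping that remark leaves the argument intact.
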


We use randomized algorithms for our problems. Recall that a \emph{Monte Carlo algorithm} is a randomized algorithm whose running time is deterministic, but whose output may be incorrect with a certain (typically small) probability. A Monte-Carlo algorithm is \emph{true-biased} (\emph{false-biased} respectively) if it always returns a correct answer when it returns a yes-answer (a no-answer respectively).
 
\section{\classFPT-algorithms for Partial Superstring}
In this section we show that \textsc{Partial Superstring} is \classFPT, when parameterized by $k$ or $\ell$. For technical reasons, we consider the following variant of the problem with weights:\\
\defproblemu{\textsc{Partial Weighted Superstring}}{A collection of strings $S$ over an alphabet $\Sigma$ with a weight function $w\colon S\rightarrow \mathbb{N}_0$, and   non-negative integers $k,\ell$ and $W$.}
{Is there a string $s$ of length at most $\ell$ such that $s$ is a superstring of a collection of $k$ strings $S'\subseteq S$ with $w(S')\geq W$?}\\
Clearly, if $w\equiv 1$ and $W=k$, then we have the  \textsc{Partial Superstring} problem.

\begin{theorem}\label{thm:FPT-weight}
\textsc{Partial Weighted Superstring} can be solved  
in time $O((2e)^{k}\cdot kn^2m\log W)$ by  a true-biased Monte-Carlo algorithm
and in time $(2e)^{k}k^{O(\log k)}\cdot n^2\log n\cdot m\log W$ by a deterministic algorithm for a collection of $n$ strings of length at most $m$.
\end{theorem}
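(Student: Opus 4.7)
The plan is to reduce \textsc{Partial Weighted Superstring} to a constrained $k$-vertex path problem on an auxiliary ``overlap'' graph, and then apply the color-coding technique of Alon, Yuster, and Zwick. Preprocess $S$ as in Lemma~\ref{lem:overlap}: inclusion relations between strings can be handled for free (if $s\subseteq s'$, covering $s'$ automatically covers $s$), so we may assume the relevant strings are pairwise inclusion-incomparable. Build a directed graph $H$ whose vertices are these strings; for each ordered pair $(s,s')$ add an arc of ``appending cost'' $c(s,s')=|s'|-|\overlap(s,s')|$, and give each vertex the weight $w(s)$ of the corresponding string. By Lemma~\ref{lem:overlap}, for any $k$-element subcollection $\{s_{i_1},\dots,s_{i_k}\}$, the shortest superstring is obtained by some permutation $s_{i_{\pi(1)}}\circ\cdots\circ s_{i_{\pi(k)}}$ and has length exactly $|s_{i_{\pi(1)}}|+\sum_{j=1}^{k-1}c(s_{i_{\pi(j)}},s_{i_{\pi(j+1)}})$. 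Thus the instance is a yes-instance iff $H$ contains a simple $k$-vertex directed path whose ``string length'' (starting vertex length plus arc costs along the path) is at most $\ell$ and whose vertex-weight sum is at least $W$.

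Next I would run color-coding. Color each vertex independently and uniformly with one of $k$ colors; any fixed $k$-vertex candidate solution becomes \emph{colorful} with probability at least $k!/k^{k}\geq e^{-k}$, so repeating the coloring $e^{k}$ times amplifies success to a constant while keeping the algorithm true-biased (``yes'' is produced only when a witnessing path is actually constructed). For a fixed coloring, use the standard DP indexed by pairs $(v,C)$, where $v\in V(H)$ and $C\subseteq\{1,\dots,k\}$ is the set of colors used so far. Since any candidate path contains $k$ strings of length at most $m$, the resulting string length lies in $\{0,1,\dots,km\}$, so we may assume $\ell\leq km$ (otherwise the instance is trivial). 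I extend the DP by this dimension: let $T[v][C][\ell']$ be the maximum weight-sum of a colorful path ending at $v$, using exactly the colors in $C$, with total string length $\ell'$. Transitions scan the at most $n$ in-neighbors of $v$ whose color lies in $C$ and compare/add weights of $O(\log W)$ bits. A direct accounting gives $O(2^{k}\cdot n^{2}\cdot km\cdot\log W)$ time per coloring, and multiplication by the $e^{k}$ repetitions yields the randomized bound $O((2e)^{k}\cdot kn^{2}m\log W)$.

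For the deterministic statement I would replace random colorings by an explicit $(n,k)$-perfect hash family of size $e^{k}k^{O(\log k)}\log n$ (Naor--Schulman--Srinivasan), and run the same DP for each hash function in the family; the $\log n$ factor is absorbed into the polynomial portion, giving the claimed deterministic bound. The main obstacle I expect is carrying both the length and the weight constraints through the DP without inflating the table: the trick that makes everything line up is that the length coordinate is automatically bounded by $km$ (regardless of the possibly larger input value $\ell$), so it is cheap to index explicitly, whereas the weight is handled by \emph{maximizing} per cell, exploiting the fact that we only need to certify weight~$\geq W$. A small amount of care is required to handle strings that are substrings of others (via Lemma~\ref{lem:overlap}, they can always be absorbed into a larger selected string without increasing length), but that is bookkeeping rather than a genuine difficulty.
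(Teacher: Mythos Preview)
Your overall framework—color-coding followed by a DP over color subsets, an ``end string,'' and a length coordinate bounded by $km$—is exactly the paper's approach, and the running-time accounting and derandomization are correct. However, the preprocessing step ``we may assume the relevant strings are pairwise inclusion-incomparable'' is a genuine gap, not mere bookkeeping. In \textsc{Partial Weighted Superstring} you must select $k$ strings and sum their weights; an optimal selection may well contain a pair $t\subset t'$, since including $t$ costs nothing in superstring length yet contributes $w(t)$ to the weight target and one unit toward the count $k$. If you delete $t$ from $S$ you lose that option; if you fold $w(t)$ into $w(t')$ you over-count when $t$ is a substring of several candidates and still cannot record that two strings were selected. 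Consequently the reduction to a simple $k$-vertex path in $H$ is wrong: the number of inclusion-maximal strings among the $k$ selected ones may be strictly smaller than $k$, so the path you search for need not have $k$ vertices, and its vertex-weight sum is not the weight of the selection.

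The paper resolves this by \emph{not} preprocessing substrings away. All of $S$ is colored, and the DP has two transitions: the ``append'' transition you describe (move from tail $y$ to a new incomparable tail $x$, paying $|x|-|\overlap(y,x)|$ in length), and an additional ``absorb'' transition that, given current tail $x$, consumes any $y\subseteq x$ of an unused color at zero length cost while adding $w(y)$ to the accumulated weight. Lemma~\ref{lem:overlap} guarantees that every non-maximal selected string is a substring of some individual maximal one, so absorbing only at the current tail suffices. Once you add this transition (and drop the preprocessing), your DP coincides with the paper's $W(X,x,h)$ recurrence and the stated bounds follow.
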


\begin{proof}
First, we describe the randomized algorithm and then explain how it can be derandomized. The algorithm uses the color coding technique proposed by Alon, Yuster and Zwick~\cite{AlonYZ95}. 

If $\ell\geq km$, then the problem is trivial, as the concatenation of any $k$ strings of $S$ has length at most $\ell$ and we can greedily choose $k$ strings of maximum weight. Assume that $\ell< km.$

We color the strings of $S$ by $k$ colors $1,\ldots,k$ uniformly at random independently from each other.
Now we are looking for a string $s$ that is a superstring of $k$ strings of maximum total weight that have pairwise distinct colors. 

To do it, we apply the dynamic programming across subsets. 
For simplicity, we explain only how to solve the existence problem, but our algorithm can be modified to find a colorful superstring as well.
For $X\subseteq \{1,\ldots,k\}$, a string $x\in S$ and a positive integer $h\in\{1,\ldots,\ell\}$,
the algorithm computes the maximum weight $W(X,x,h)$ of a string $s$ of length at most $h$ such that
\begin{itemize}
\item[i)] $s$ is a superstring of a collection of $k'=|X|$ strings $S'\subseteq S$ of pairwise distinct colors from $X$, 
\item[ii)] $x$ is inclusion maximal string of $S'$ and $x=\suffix_{|x|}(s)$. 
\end{itemize}
If such a string $s$ does not exist, then $W(X,x,h)=-\infty$.

We compute the table of values of $W(X,x,h)$ consecutively for $|X|=1,\ldots,k$. To simplify computations, we assume that $W(X,x,h)=-\infty$ for $h< 0$.
If $|X|=1$, then for each string $x\in S$, we set $W(X,x,h)=w(x)$ if $x$ is colored by the unique color of $X$ and $|x|\leq h$. In all other cases $W(X,x,h)=-\infty$.
Assume that $|X|=k'\geq 2$ and the values of $W(X',x,h)$ are already computed if $|X'|<k'$. 
Let
$$W'=\max\{W(X\setminus\{c\},x,h)+w(y)\mid y\subseteq x\text{ has  color }c\in X\},$$
and
$$W''=\max\{W(X\setminus \{c\},y,h-|x|+|\overlap(y,x)|)+w(x)\mid x\not\subseteq y,y\not\subseteq x\},$$
where $c$ is the color of $x$; we assume that $W'=-\infty$ if there is no substring $y$ of $x$ of color $c\in X$, and 
$W''=-\infty$  if every string $y$ is a sub or superstring of $x$. 
We set $W(X,x,h)=\max\{W',W''\}$.

We show that $\max\{W(\{1,\ldots,k\},x,\ell)\mid x\in S\}$ is the maximum weight  of $k$ strings of $S$ colored by distinct colors that have a superstring of length at most $\ell$; if this value equals $-\infty$, then there is no string of length at most $\ell$ that is a superstring of $k$ string of $S$ of distinct colors.

To prove this, it is sufficient to show that the values $W(X,x,h)$ computed by the algorithms are the maximum weights of strings of length at most $h$ that satisfy (i) and (ii). The proof is by induction on the size of $|X|$. It is straightforward to verify that it holds if $|X|=1$. Assume that $|X|>1$ and the  claim holds for sets of lesser size. 
Denote by  $W^*(X,x,h)$ the maximum weight of a string $s$ of length at most $h$ that satisfies (i) and (ii).  
By the description of the algorithm, $W^*(X,x,h)\geq W(X,x,h)$. We show that  $W^*(X,x,h)\leq W(X,x,h)$. 

Let $S'$ be a collection of $k'$ strings of pairwise distinct colors from $X$ that have $s$ as a superstring. Denote by $S''$ a set of inclusion maximal distinct strings of $S'$ that contains $x$ such that every string of $S'$ is a substring of a string of $S''$. Assume that $S''=\{x_1,\ldots,x_r\}$ and $x_i=s[p_i,q_i]$ for $i\in\{1,\ldots,r\}$.
Clearly, $x=x_r$. 

Suppose that there is $y\in S'\setminus\{x\}$ such that $y\subseteq x$. Let $c\in X$ be a color of $y$. Then $s$ is a superstring of $S'\setminus\{y\}$ and the total weight of these string is $W^*(X,x,h)-w(y)$. By induction, $W^*(X,x,h)-w(y)\leq W(X\setminus\{c\},x,h)$ and we have that $W^*(X,x,h)\leq W(X\setminus\{c\},x,h)+w(y)\leq W'\leq W(X,x,h)$.

Suppose now that $S'\setminus\{x\}$ does not contain substrings of $x$. Then $r\geq 2$. Let $y=s_{r-1}$ and $s'=s[1,q_{i-1}]$.  Observe that $y=\suffix_{|y|}(s')$.
Notice that $s'$ is a superstring of $S''\setminus x$. 
Because $S'\setminus \{x\}$ has no substrings of $x$, every string in $S'\setminus\{x\}$ is a substring of any superstring of $S''\setminus\{x\}$  and, therefore, $s'$ is a superstring of $S'\setminus\{x\}$ of length at most $|s|-|x|+|\overlap(y,x)|\leq h-|x|+|\overlap(y,x)|$.  The weight of $S'\setminus\{x\}$ is $W^*(X,x,h)-w(x)$. By induction, 
$W^*(X,x,h)-w(x)\leq W(X\setminus\{c\},y,h-|x|+|\overlap(y,x)|)$. Hence $W^*(X,x,h)\leq W(X\setminus\{c\},y,h-|x|+|\overlap(y,x)|)+w(x)\leq W''\leq W(X,x,h)$.

To evaluate the running time of the dynamic programming algorithm, observe that we can check whether $y$ is a substring of $x$ or find $\overlap(y,x)$ in time $O(m)$ using, e.g., the algorithm of Knuth, Morris, and Pratt~\cite{KnuthMP77}, 
and we can construct the table of the overlaps and their sizes in time $O(n^2m)$. 
Hence, for each $X$, the values $W(X,x,h)$ can be computed in time $O(n^2 k m\log W)$, as 
$h\leq \ell<km$. 
Therefore, the running time is $O(2^k\cdot n^2 k m\log W)$. 

We proved that an optimal colorful solution can be found in time  $O(2^k\cdot n^2 k m\log W)$. Using the standard color coding arguments (see~\cite{AlonYZ95}), we obtain that it is sufficient to consider $N=e^k$ random colorings of $S$ to claim that with probability $\alpha>0$, where $\alpha$ is a constant that does not depend on the input size and the parameter, we get a coloring for which $k$ string of $S$ that have a superstring of length at most $\ell$ and the total weight at least $W$ are colored by distinct colors if such a string exists.  It implies that 
\textsc{Partial Weighted Superstring} can be solved  
in time $O((2e)^{k}\cdot k n^2 m\log W)$ by  our randomized algorithm.

To derandomize the algorithm, we apply the  technique proposed by Alon, Yuster and Zwick~\cite{AlonYZ95}
using the $k$-perfect hash functions constructed by Naor, Schulman and Srinivasan~\cite{NaorSS95}. 
The random colorings are replaced by the family of at most $e^{k}k^{\log k}\log n$ hash functions $c\colon S\rightarrow\{1,\ldots,k\}$ that have the following property:
there is a hash function $c$ that colors 
$k$ string of $S$ that have a superstring of length at most $\ell$ and the total weight at least $W$ by distinct colors if such a string exists.
It implies that \textsc{Partial Weighted Superstring} can be solved in time $(2e)^k k^{O(\log k)}\cdot n^2\log n\cdot m\log W$ deterministically.
\end{proof}

Because \textsc{Partial Superstring} is a special case of \textsc{Partial Weighted Superstring}, Theorem~\ref{thm:FPT-weight} implies that this problem is \classFPT\ when parameterized by $k$. We show that the same holds if we parameterize the problem by $\ell$.

\begin{corollary}\label{cor:FPT-ell}
\textsc{Partial Superstring} is \classFPT\ when parameterized by $\ell$.
\end{corollary}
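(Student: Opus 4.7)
The plan is to reduce the problem, parameterized by $\ell$, to instances of \textsc{Partial Weighted Superstring} whose size-parameter depends only on $\ell$, so that Theorem~\ref{thm:FPT-weight} yields an \classFPT\ algorithm.

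As preprocessing I would discard every string in $S$ of length greater than $\ell$, since no such string can occur as a substring of a superstring of length at most $\ell$. I would then collapse the remaining multiset into a set $D$ of pairwise distinct strings, assigning each $d \in D$ weight $w(d)$ equal to its multiplicity in the pruned input.

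The key combinatorial observation is that any string $s$ of length at most $\ell$ has at most $\binom{\ell+1}{2} \le \ell^2$ distinct nonempty substrings, because a substring is determined by its start and end positions in $s$. Consequently, if the original instance is a yes-instance witnessed by a superstring $s$, then the set $T \subseteq D$ of distinct strings occurring as substrings of $s$ satisfies $|T| \le \ell^2$ and $\sum_{d \in T} w(d) \ge k$. Conversely, any $T \subseteq D$ admitting a common superstring of length at most $\ell$ with $\sum_{d \in T} w(d) \ge k$ witnesses a yes-instance of the original problem, since we may take all $w(d)$ copies of each chosen $d$.

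It thus suffices to iterate over $k' = 1, 2, \ldots, \min(|D|, \ell^2)$ and, for each value, run the algorithm of Theorem~\ref{thm:FPT-weight} on the weighted instance $(D, w, k', \ell, W=k)$; the original instance is a yes-instance if and only if at least one of these calls returns yes. Each call runs in time $(2e)^{k'} k'^{O(\log k')} \cdot n^2 \log n \cdot m \log k$, and since $k' \le \ell^2$ the total running time is bounded by a function of $\ell$ times a polynomial in the input length, which is exactly the \classFPT\ claim. The only real content is the $\ell^2$ bound on distinct substrings of a length-$\ell$ string; once this is in place the reduction to the weighted problem is essentially routine.
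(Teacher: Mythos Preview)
Your proposal is correct and follows essentially the same approach as the paper: collapse the multiset to a weighted set, bound the number of distinct substrings of a length-$\ell$ string by a polynomial in $\ell$, and iterate Theorem~\ref{thm:FPT-weight} over all $k'$ up to that bound with threshold $W=k$. The only cosmetic differences are your explicit pruning of strings longer than $\ell$ (harmless and not needed) and your substring bound $\binom{\ell+1}{2}$, which is in fact the correct count where the paper's stated $\ell(\ell-1)/2$ is a minor slip.
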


\begin{proof}
Consider an instance $(S,k,\ell)$ of \textsc{Partial Superstring}.  Recall that $S$ can contain several copies of the same string. We construct a set of weighted strings $S'$ by replacing a string $s$ that occurs $r$ times in $S$ by the single copy of $s$ of weight $w(s)=r$. Let $W=k$.
Observe that there is a string $s$ of length at most $\ell$ such that $s$ is a superstring of a collection of at least $k$ strings of $S$ if and only if there a string $s$ of length at most $\ell$ such that $s$ is a superstring of a set of  strings of $S'$ of total weight at least $W$.  A string of length at most $\ell$ has at most $\ell(\ell-1)/2$ distinct substrings. We consider the instances 
$(S',w,k',\ell,W)$ of \textsc{Partial Weighted Superstring} for $k'\in\{1,\ldots,\ell(\ell-1)/2\}$. For each of these instances, we solve the problem using Theorem~\ref{thm:FPT-weight}.
It remains to observe that  there is a string $s$ of length at most $\ell$ such that $s$ is a superstring of a set of  strings of $S'$ of total weight at least $W$ if and only if one of the instances 
$(S',w,k',\ell,W)$ is a yes-instance of \textsc{Partial Weighted Superstring}.
\end{proof}

We complement the above algorithmic results by showing that we hardly can expect that \textsc{Partial Superstring} has a polynomial kernel when parameterized by $k$ or $\ell$. 
\begin{theorem}\label{thm:no-kernel}
\textsc{Partial Superstring} does not admit a  polynomial kernel when parameterized by $k+m$ or $\ell+m$ for strings of length at most $m$ over the alphabet $\Sigma=\{0,1\}$ unless $\classNP\subseteq\classCoNP/\text{\rm poly}$.
\end{theorem}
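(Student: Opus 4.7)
My plan is to apply Theorem~\ref{thm:BJK} via an OR-cross-composition from directed \textsc{Hamiltonian Path}, which is \classNP-complete by Lemma~\ref{lem:trail} (the graph $G'$ built there has a trail of length $|V(G')|-1$ iff the underlying $G$ has a Hamiltonian path). As the polynomial equivalence relation $\mathcal{R}$ I take the rule that two instances are equivalent when they have the same number of vertices $n$, with a single garbage class absorbing ill-formed inputs.

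Given $t$ equivalent instances $G_1,\ldots,G_t$ on vertex set $[n]$, I will build a single instance of \textsc{Partial Superstring} over $\{0,1\}$. Pick a binary index codeword $d_i\in\{0,1\}^p$ for each $i\in[t]$ with $p=O(\log t)$ and a binary vertex codeword $c_v\in\{0,1\}^q$ for each $v\in[n]$ with $q=O(\log n)$, together with a short fixed binary separator $\#$; using a standard trick (e.g.\ bit-doubling together with a reserved pattern as the separator) these can be arranged so that no codeword has any nontrivial overlap with any other codeword or with $\#$. Introduce the \emph{vertex string} $V_{i,v}:=d_i\#c_v\#d_i$ for every $(i,v)\in[t]\times[n]$ and the \emph{arc string} $A_{i,u,v}:=c_u\#d_i\#c_v$ for every arc $(u,v)$ of $G_i$. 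Set $k:=2n-1$ and let $\ell$ be the length of the alternating gluing
\[ d_i\,\#\,c_{v_1}\,\#\,d_i\,\#\,c_{v_2}\,\#\,\cdots\,\#\,d_i\,\#\,c_{v_n}\,\#\,d_i,\]
which is $(n+1)p+nq+O(n)$.

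For correctness in the forward direction, a Hamiltonian path $v_1,\ldots,v_n$ in some $G_i$ realises the displayed string as a superstring of length exactly $\ell$, and this string contains the $n$ vertex strings $V_{i,v_j}$ and the $n-1$ arc strings $A_{i,v_j v_{j+1}}$, yielding $2n-1=k$ covered strings. For the converse, I take a candidate superstring $s$ of length $\leq\ell$ covering at least $k$ inputs and invoke Lemma~\ref{lem:overlap} to pass to the inclusion-maximal covered strings, assumed to appear left-to-right along $s$ with $s$ equal to their $\circ$-gluing; a rigidity analysis will then force these strings to come from a single $G_{i^*}$ and to alternate vertex-arc-vertex-$\cdots$-arc-vertex in exactly the pattern above, thereby tracing out a Hamiltonian path in $G_{i^*}$. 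The parameter bookkeeping is then immediate: the longest string has length $m=O(p+q)=O(\log t+\log n)$, so $k+m=O(n+\log t)$ and $\ell+m=O(n(\log n+\log t))$, both polynomially bounded in $\max_j|G_j|+\log t$, as the cross-composition framework demands for each of the two parameterizations claimed in the statement.

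The step I expect to be the main obstacle is the rigidity analysis in the converse. Because $\ell$ is tuned exactly to the alternating gluing, every symbol of overlap ``lost'' relative to that optimum must be paid by an extra symbol of superstring length, so the budget is razor-thin. I therefore need to argue that (i) placing an instance-$j$ string adjacent to an instance-$i$ string with $i\neq j$ cannot realise the full $p+|\#|+q$ overlap, because the $d_i$- and $d_j$-slots cannot align, and (ii) two vertex strings, or two arc strings, placed adjacently likewise fall short of that maximum overlap. Each such violation loses at least one symbol of overlap—which the no-nontrivial-overlap design of the codewords translates into at least one extra symbol of superstring length—so even a single deviation pushes $|s|$ strictly above $\ell$. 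Formalising this counting, while combinatorial in nature, will be the technically delicate part of the proof; granting it, Theorem~\ref{thm:BJK} yields the claimed exclusion of polynomial kernels under $\classNP\not\subseteq\classCoNP/\text{poly}$.
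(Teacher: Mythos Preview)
Your high-level plan---OR-cross-composition into \textsc{Partial Superstring} via Theorem~\ref{thm:BJK}---matches the paper, and your parameter bookkeeping is fine. However, the specific encoding you propose has a genuine gap in the converse direction that your rigidity sketch does not cover.

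The issue is not that a ``deviation'' loses overlap; it is that there are \emph{two} perfect alternating patterns, and you only analyse one of them. With $k=2n-1$ strings and every overlap equal to the maximum $p+|\#|+q$, the sequence can be $V\,A\,V\cdots A\,V$ (your intended pattern, $n$ vertex strings and $n-1$ arc strings) \emph{or} $A\,V\,A\cdots V\,A$ ($n$ arc strings and $n-1$ vertex strings). A direct computation gives length $(n+1)p+nq+2n|\#|$ for the first and $np+(n+1)q+2n|\#|$ for the second. Since $p=\Theta(\log t)$ while $q=\Theta(\log n)$, one typically has $p>q$, so the $A\cdots A$ pattern is \emph{shorter} than your $\ell$. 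That second pattern encodes a trail of $n$ distinct arcs in some $G_i$ whose $n-1$ intermediate vertices are distinct---which is \emph{not} equivalent to a Hamiltonian path. Concretely, take $n=4$, $V(G_i)=\{1,2,3,4\}$, $E(G_i)=\{(2,1),(1,2),(2,3),(3,1)\}$: the trail $2\to 1\to 2\to 3\to 1$ yields a valid $A\,V\,A\,V\,A\,V\,A$ gluing of seven strings of length $<\ell$, yet $G_i$ has no Hamiltonian path (vertex $4$ is isolated). Thus the composed instance can be a yes-instance even when every $G_i$ is a no-instance, breaking the cross-composition.

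This is repairable (e.g.\ pad the vertex codewords so that $q>p$, which keeps $m=O(\log t+\log n)$ and forces the $V\cdots V$ pattern to be the unique length-$\ell$ option), but as written the reduction is incorrect. The paper sidesteps the whole difficulty by encoding only \emph{arcs}: for each arc $e=(v_p,v_q)$ of $G_i$ it builds a single string $s_e=x_ix^*x_ix_px_ix^*x_ix_qx_ix^*x_i$, all of the same length $11r$, designed so that $|\overlap(s_e,s_{e'})|\le 7r$ always, with equality iff $e,e'$ are consecutive arcs of the same $G_i$. Then $k$ equals the desired trail length and the source problem is \textsc{Long Trail} (Lemma~\ref{lem:trail}) rather than Hamiltonian Path. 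With a single string type the ``which pattern'' ambiguity cannot arise, and the overlap bound makes the converse a one-line counting argument.
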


 \begin{proof}[Theorem~\ref{thm:no-kernel}]
We show that \textsc{Long Trail} OR-cross-composes into \textsc{Partial Superstring}. Recall that \textsc{Long Trail} was shown to be \classNP-complete in Lemma~\ref{lem:trail}.

We assume that two instances $(G,\ell)$ and $(G',\ell')$ of  \textsc{Long Trail} are equivalent if $|V(G)|=|V(G')|$ and $\ell=\ell'$.
Consider equivalent instances $(G_i,\ell)$ of   \textsc{Long Trail} for $i\in\{1,\ldots,t\}$. Let $V(G_i)=\{v_1^i,\ldots,v_n^i\}$ for $i\in\{1,\ldots,t\}$. 
Let $r=\max\{\lfloor \log n\rfloor,\lfloor \log t\rfloor\}+2$. Denote by $x_i$ the string of length $r$ that encodes a positive integer $i$ in binary for $i\leq 2^{r}-1$.  
Let $x^*=x_i$ for $i=2^r-1$, i.e., $x^*=\text{'}1\ldots1\text{'}$. Notice that if $i\leq \max\{n,t\}$, then the first symbol of $x_i$ is '0'. 
For each arc $e=(v^i_p,v^i_q)$ of $G_i$, we construct a string $s_e=x_ix^*x_ix_px_ix^*x_ix_qx_ix^*x_i$. 
Clearly, $|s_e|=11r$.
We define
$$S=\{s_e\mid e\in E(G_i),1\leq i\leq t\}$$ and let $k=\ell$, $\ell'=4r\ell+7r$. 
We claim that there is $i\in\{1,\ldots,t\}$ such that $G_i$ has a trail of length $\ell$ if and only if there is a string $s$ of length at most $\ell'$ that is a superstring of $k$ strings of $S$.

Suppose that  there is $i\in\{1,\ldots,t\}$ such that $G_i$ has a  trail  $e_1,\ldots,e_{\ell}$.  Consider $s=s_{e_1}\circ\ldots\circ s_{e_\ell}$. Because the length of each $s_{e_i}$ is $11r$ and $|\overlap(s_{e_{i-1}},s_{e_i})|\geq 7r$, we obtain that $|s|\leq 11r\ell-7r(\ell-1)=\ell'$. Hence, $s$ is a string of length at most $\ell'$ that is a superstring of $k=\ell$ strings.

Assume now that there is a string $s$ of length at most $\ell'$ that is a superstring of $k$ strings of $S$. Because no string of $S$ is a substring of another one, we can assume that 
$s=s_{e_1}\circ\ldots\circ s_{e_k}$ for some $e_1,\ldots,e_k\in E(G_1)\cup\ldots\cup E(G_t)$ by Lemma~\ref{lem:overlap}.  
We use the following properties of the overlap of two strings $s_e,s_{e'}\in S$. Recall that if $e=(v^i_p,v^i_q)$ of $G_i$, then $s_e=x_ix^*x_ix_px_ix^*x_ix_qx_ix^*x_i$, $x^*=\text{'}1\ldots1\text{'}$ and 
the first symbol of $x_i$ is '$0$'. It implies that $|\overlap(s_e,s_{e'})|\leq 7r$ and  $|\overlap(s_e,s_{e'})|= 7r$ if and only if $e,e'\in E(G_i)$ for some $i\in\{1,\ldots,t\}$
and $e=(v^i_p,v_q^i)$, $e'=(v^i_q,v^i_z)$ for some $p,q,z\in\{1,\ldots,n\}$. Since $|s|\leq \ell'=4r\ell+7r$ and $k=\ell$, $|\overlap(s_{e_{j-1}},s_{e_j})|=7r$ for $j\in\{2,\ldots,k\}$.
Hence, $e_1,\ldots,e_\ell$  is a  trail in some $G_i$.

It remains to observe that $k+m=O(n+\log t)$ and $\ell'+m=O((n+\log t)^2)$ to complete the proof. 
\end{proof}

\section{Shortest Superstring below guaranteed values}\label{sec:below}
In this section we discuss \textsc{Shortest Superstring} parameterized by the difference between upper bounds for the length of a shortest superstring and the length of a solution superstring.  
For a collection of strings $S$, the length of the shortest superstring is trivially upper bounded by $\sum_{x\in S}|x|$. We show that \textsc{Shortest Superstring} admits a polynomial kernel when parameterized by the compression measure of a solution.

\begin{theorem}\label{thm:kernel}
\textsc{Shortest Superstring} admits a kernel of size $O(r^4)$ when parameterized by $r=\sum_{x\in S}|x|-\ell$.
\end{theorem}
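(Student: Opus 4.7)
The plan is to construct the kernel by iteratively applying polynomial-time reduction rules, each preserving the yes/no answer and the parameter $r=\sum_i|s_i|-\ell$ (whenever a string is shortened or removed, $\ell$ is decremented by the same amount so $r$ stays invariant). Two routine rules come first. \textbf{Rule~1 (inclusion-maximal):} delete any $s_i$ that is a substring of some other $s_j\in S$; by Lemma~\ref{lem:overlap} the length of the shortest superstring does not change. \textbf{Rule~2 (large overlap):} if $|\overlap(s_i,s_j)|\ge r$ for some $i\ne j$, output a canonical yes-instance, since arranging $s_i,s_j$ consecutively in any ordering of $S$ already saves $r$ characters. After Rule~2 every pairwise overlap is at most $r-1$.

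Next I would compress each string down to its overlap-relevant part. For each remaining $s_i$ set $p_i=\max_{j\ne i}|\overlap(s_j,s_i)|\le r-1$ and $q_i=\max_{j\ne i}|\overlap(s_i,s_j)|\le r-1$. In any superstring of $S$, the positions of $s_i$ strictly between indices $p_i+1$ and $|s_i|-q_i$ can never be shared with another input string, since the longest prefix of $s_i$ matching some other string's suffix has length $p_i$ and symmetrically for the suffix. \textbf{Rule~3 (shrink interior):} replace this inner block by a single fresh symbol $\#_i$ (a new character per $i$) and subtract $|s_i|-p_i-q_i-1$ from $\ell$. Since $\#_i$ matches nothing else, all pairwise overlaps (and hence the maximum achievable compression) are preserved, and afterwards every string has length at most $2r-1$. \textbf{Rule~4 (isolated strings):} if $p_i=q_i=0$, so $s_i$ is reduced to the single symbol $\#_i$ with no positive overlap to anything, delete $s_i$ and subtract $1$ from $\ell$; such an $s_i$ always appears as a detached singleton in every superstring of $S$.

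The remaining and main step is to bound the number of surviving strings $n$ by $O(r^3)$, so that the total kernel size is $n\cdot(2r-1)=O(r^4)$. For this I would build the bipartite weighted overlap graph $B$ with vertex classes $\{u_i\}$, $\{v_j\}$ and edges $\{u_i,v_j\}$ of weight $|\overlap(s_i,s_j)|\ge 1$ whenever that overlap is positive. Any Hamiltonian ordering of $S$ induces a matching in $B$ of the same total weight, so the maximum-weight matching $M^*$ of $B$ upper-bounds the achievable compression; if $w(M^*)<r$ we output a canonical no-instance. Otherwise, using that every edge of $B$ has weight at most $r-1$ and that at most $r$ positive-weight edges suffice in any compression-$r$ witness, I would extract an ``active core'' of at most $O(r^3)$ strings via a marking/sunflower argument and show that every string outside the core either duplicates the overlap profile of some core string (and can be merged into it) or can be detached and removed by a further application of Rule~4. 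Formalizing this core-extraction step, and in particular the $O(r^3)$ bound on its size, is the main obstacle; the preceding rules are routine consequences of the overlap definitions and Lemma~\ref{lem:overlap}.
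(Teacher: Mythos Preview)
Your preliminary rules are essentially the paper's (inclusion removal, the ``big overlap $\Rightarrow$ yes'' rule, truncation to length $O(r)$, deletion of isolated strings), so that part is fine. One small slip: for Rule~1 you cannot keep $r$ invariant. Deleting $s_i\subseteq s_j$ leaves the optimum unchanged, so $\ell$ must stay the same while $\sum|s_i|$ drops; thus $r$ \emph{decreases} (which only helps). Decrementing $\ell$ there would not preserve the yes/no answer.

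The genuine gap is the step you yourself flag as ``the main obstacle''. Your bipartite matching $M^*$ correctly upper-bounds the achievable compression (so $w(M^*)<r\Rightarrow$ no), but the sketch ``extract an active core via a marking/sunflower argument'' is not a proof, and a sunflower lemma does not obviously apply here. The paper's argument is quite different and concrete: take a \emph{maximal} (not maximum, not weighted) matching $M$ in the undirected overlap graph (edges $=$ positive overlap). If $|M|\ge r$ you are done (each matched pair saves at least one symbol). Otherwise the endpoints $X$ of $M$ form a vertex cover of size $h\le 2(r-1)$, and $Y=S\setminus X$ is an independent set, i.e.\ strings in $Y$ have pairwise empty overlaps. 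Hence in any optimal ordering a $Y$-string can contribute compression only when it is adjacent to some $X$-string, so at most $2h$ strings of $Y$ are ``active''. For each ordered pair $(x_i,x_j)$ and each single $x_i$ one keeps the top $2h$ candidates from $Y$ (ranked by the relevant overlap sums); an exchange argument shows some optimal ordering uses only these. This gives $|S'|\le h+2h\cdot h(h-1)+2\cdot 2h\cdot h=O(h^3)=O(r^3)$ strings of length $O(r)$, hence the $O(r^4)$ kernel. That vertex-cover/independent-set structure is the missing idea in your plan.
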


\begin{proof}
Let $(S,\ell)$ be an instance of \textsc{Shortest Superstring}, $r=\sum_{x\in S}|x|-\ell$. First, we apply the following reduction rules for the instance.

\medskip
\noindent
{\bf Rule~1.} If there are distinct elements $x$ and $y$ of $S$ such that $x\subseteq y$, then delete $x$ and set $r=r-|x|$. If $r\leq 0$, then return a yes-answer and stop.

\medskip
\noindent
{\bf Rule~2.} If there is $x\in S$ such that for any $y\in S\setminus\{x\}$, $|\overlap(x,y)|=|\overlap(y,x)|=0$, then delete $x$ and set $\ell=\ell-|x|$. 
If $S=\emptyset$ and $\ell\geq 0$, then return a yes-answer and stop. If $\ell<0$, then return a no-answer and stop.

\medskip
\noindent
{\bf Rule~3.} If there are distinct elements  $x$ and $y$ of $S$ such that $|\overlap(x,y)|\geq r$, then return a yes-answer and stop. 

\medskip
It is straightforward to verify that these rules are \emph{safe}, i.e., by the application of a rule we either solve the problem or obtain an equivalent instance.
We exhaustively apply Rules~1--3. To simplify notations, we assume that $S$ is the obtained set of strings and $\ell$ and $r$ are the obtained values of the parameters.
Notice that all strings in $S$ are distinct and no string is a substring of another.
Our next aim is to bound the lengths of considered strings.

\medskip
\noindent
{\bf Rule~4.} If there is $x\in S$ with $|s|>2r$, then set $\ell=\ell-|x|+2r$ and $x=\prefix_r(x)\suffix_r(x)$. If $\ell<0$, then return a no-answer and stop.

\medskip
To see that the rule is safe, recall that $x$ is not a sub or superstring of any other string of $S$, and $|\overlap(x,y)|<r$ and $|\overlap(y,x)|<r$ for any $y\in S$ distinct from $x$ after the applications of Rule~3. As before, we apply Rule~4 exhaustively. 

Now we construct an auxiliary graph $G$ with the vertex set $S$ such that two distinct $x,y\in S$ are adjacent in $G$ if and only if $|\overlap(x,y)|>0$ or $|\overlap(y,x)|>0$. 
We greedily select a maximal matching $M$ in $G$ and apply the following rule.

\medskip
\noindent
{\bf Rule~5.} If $|M|\geq r$, then return a yes-answer and stop. 

\medskip
To show that the rule is safe, it is sufficient to observe that if $M=\{x_1,y_1\},\ldots,\{x_h,y_h\}$, $|\overlap(x_i,y_i)|>0$ for $i\in\{1,\ldots,h\}$ and $h\geq r$, then the string $s$ obtained by the consecutive concatenations with overlaps of   $x_1,y_1,\ldots,x_h,y_h$ and then all the other strings of $S$ in arbitrary order, then the compression measure of $s$ is at least $r$. 

Assume from now that we do not stop here, i.e., $|M|\leq r-1$. Let $X\subseteq S$ be the set of end-vertices of the edges of $M$ and $Y=S\setminus X$. Let $X=\{x_1,\ldots,x_h\}$. Clearly, $h\leq 2(r-1)$. Observe that $X$ is a vertex cover of $G$ and $Y$ is an independent set of $G$. 

For each ordered pair $(i,j)$ of distinct $i,j\in\{1,\ldots,h\}$, find an ordering $y_1,\ldots,y_t$ of the elements of $Y$ sorted by the decrease of $|\overlap(x_i,y_p)|+|\overlap(y_p,x_j)|$ for $p\in\{1,\ldots,t\}$. We construct the set $R_{(i,j)}$ that contains the first $\min\{2h,t\}$ elements of the sequence.

For each $i\in\{1,\ldots,h\}$, find an ordering $y_1,\ldots,y_t$ of the elements of $Y$ sorted by the decrease of $|\overlap(y_p,x_i)|$ for $p\in\{1,\ldots,t\}$. We construct the set $S_i$ that contains the first $\min\{2h,t\}$ elements of the sequence.

For each $i\in\{1,\ldots,h\}$, find an ordering $y_1,\ldots,y_t$ of the elements of $Y$ sorted by the decrease of $|\overlap(x_i,y_p)|$ for $p\in\{1,\ldots,t\}$. We construct the set $T_i$ that contains the first $\min\{2h,t\}$ elements of the sequence.

Let $$S'=X\cup \big(\bigcup_{(i,j),~i,j\in\{1,\ldots,h\},i\neq j}R_{(i,j)}\big)\cup\big(\bigcup_{i\in
\{1,\ldots,h\}}S_i \big)\cup\big(\bigcup_{i\in
\{1,\ldots,h\}}T_i \big).$$

\medskip
\noindent
{\bf Claim~$(*)$.} {\it There is a superstring $s$ of $S$ with the compression measure at least $r$ if and only if there is a superstring $s'$ of of $S'$ with the compression measure at least $r$. }

\begin{proof}[Proof of Claim~$(*)$]
If $s'$ is a superstring of $S'$ with the compression measure at least $r$, then the string $s$ obtained from $s'$ by the concatenation of $s'$ and the strings of $S\setminus S'$ (in any order) is a superstring of $S$ with the same compression measure as $s'$.

Suppose that $s$ is a shortest superstring of $S$ and the compression measure  at least $r$. By Lemma~\ref{lem:overlap}, $s=s_1\circ\ldots\circ s_n$, where $S=\{s_1,\ldots,s_n\}$.
Let $$Z=\{s_i\mid s_i\in Y, |\overlap(s_{i-1},s_i)|>0\text{ or }|\overlap(s_i,s_{i+1})|>0,1\leq i\leq n\};$$ 
we assume that $s_0,s_{n+1}$ are empty strings. 

We show that $|Z|\leq 2h$. Suppose that $s_i\in Z$. If $|\overlap(s_{i-1},s_i)|>0$, then $s_{i-1}\in X$, because $s_i\in Y$ and any two strings of $Y$ have the empty overlap.
By the same arguments, if   $|\overlap(s_i,s_{i+1})|>0$, then $s_{i+1}\in X$. Because $|X|=h$, we have that $|Z|\leq 2h$.

Suppose that the shortest superstring $s$ is chosen in such a way that $|Z\setminus S'|$ is minimum. We prove that $Z\subseteq S'$ in this case.
To obtain a contradiction, assume that there is $s_i\in Z\setminus S'$. We consider three cases.

\medskip
\noindent
{\bf Case 1.} $|\overlap(s_{i-1},s_i)|>0$ and $|\overlap(s_i,s_{i+1})|>0$. Recall that $s_{i-1},s_{i+1}\in X$ in this case. Since $s_i\notin S'$, $s_i\notin R_{(p,q)}$ for $x_p=s_{i-1}$ and $x_q=s_{i+1}$.
In particular, it means that $|R_{(p,q)}|=2h$. As $|Z|\leq 2h$ and $|R_{(p,q)}|=2h$, there is $s_j\in R_{(p,q)}$ such that $s_j\notin Z$, i.e., $|\overlap(s_{j-1},s_j)|=|\overlap(s_j,s_{j+1})|=0$. By the definition of $R_{(p,q)}$, $|\overlap(s_{i-1},s_j)|+|\overlap(s_j,s_{i+1})| \geq
 |\overlap(s_{i-1},s_i)|+|\overlap(s_i,s_{i+1})|$.  Consider 
$s^*=s_1\circ\ldots\circ s_{i-1}\circ s_j\circ s_i\ldots \circ s_{j-1}\circ s_i\circ s_j\circ \ldots \circ s_n$ assuming that $i<j$ (the other case is similar). Because 
$|\overlap(s_{i-1},s_j)|+|\overlap(s_j,s_{i+1})|\geq |\overlap(s_{i-1},s_i)|+|\overlap(s_i,s_{i+1})|$, $|s^*|\leq |s|$. Moreover, since $s$ is a shortest superstring of $S$, 
$|s|\geq |s^*|$ and, therefore, $|\overlap(s_{j-1},s_i)|=|\overlap(s_i,s_{j+1})|=0$. But then for  the set $Z^*$ constructed for $s^*$ in the same way as the set $Z$ for $s$, we obtain that 
$|Z^*\setminus S'|<|Z\setminus S'|$; a contradiction.

\medskip
\noindent
{\bf Case 2.} $|\overlap(s_{i-1},s_i)|=0$ and $|\overlap(s_i,s_{i+1})|>0$. Then $s_{i+1}\in X$.
 Since $s_i\notin S'$, $s_i\notin S_{p}$ for $x_p=s_{i+1}$ and $|S_p|=2h$. 
As $|Z|\leq 2h$ and $|S_p|=2h$, there is $s_j\in S_p$ such that $s_j\notin Z$, i.e., 
$|\overlap(s_{j-1},s_j)|=|\overlap(s_j,s_{j+1})|=0$. By the definition of $S_p$,
$|\overlap(s_j,s_{i+1})|\geq |\overlap(s_i,s_{i+1})|$.  
As in Case~1, consider $s^*$ obtained by the exchange of $s_i$ and $s_j$ in the sequence of strings that is used for the concatenations with overlaps.   
In the same way, we obtain a contradiction with the choice of $Z$, because for  the set $Z^*$ constructed for $s^*$ in the same way as the set $Z$ for $s$, we obtain that 
$|Z^*\setminus S'|<|Z\setminus S'|$.

\medskip
\noindent
{\bf Case 3.} $|\overlap(s_{i-1},s_i)|>0$ and $|\overlap(s_i,s_{i+1})|=0$. To obtain contradiction in this case, we use the same arguments as in Case~2 using symmetry. Notice that we should consider $T_p$ instead of $S_p$.

\medskip
Now  let $s'=s_{i_1}\circ\ldots\circ s_{i_p}$, where $s_{i_1},\ldots,s_{i_p}$ is the sequence of string of $S'$ obtained from $s_1,\ldots,s_n$ by the deletion of the strings of $S\setminus S'$. Because we have that $Z\subseteq S'$, the overlap of each deleted string with its neighbors is empty and, therefore, $s'$ has the same compression measure as $s$.
\end{proof}

To finish the construction of the kernel, we define $\ell'=\ell-\sum_{x\in S\setminus S'}|x|$ and apply the following rule that is safe by Claim $(*)$.

\medskip
\noindent
{\bf Rule~6.} If $\ell'<0$, then return a no-answer and stop. Otherwise, return the instance $(S',\ell')$ and stop. 

\medskip
Since $|X| =h \leq 2(r-1)$, $|S'|\leq h + h^2 \cdot 2h+h \cdot 2h+h\cdot 2h =2 h^3+4h^2+h =O(h^3) =O(r^3)$. Because each string of $S'$ has length at most $2r$, the kernel has size $O(r^4)$. 

It is easy to see that  Rules~1-3 can be applied in polynomial time.  Then graph $G$  and $M$ can be constructed in polynomial time and, trivially, Rule~5 demands $O(1)$ time. The sets $X$, $Y$, $R_{(i,j)}$, $S_i$ and $T_i$ can be constructed in polynomial time. Hence, $S'$ and $\ell'$ can be constructed in polynomial time. Because Rule~6 can be applied in time $O(1)$, we conclude that the kernel is constructed in polynomial time.
\end{proof}

Now we consider another upper bound for the length of the shortest superstring. Let $S$ be a collection of strings. We construct 
 an auxiliary weighted graph $G(S)$ with the vertex set $S$ by assigning the weight $w(\{x,y\})=\max\{|\overlap(x,y)|,|\overlap(y,x)|\}$
for any  two distinct $x,y\in S$. Let $\mu(S)$ be the size of a maximum weighted matching in $G$. Clearly, $G$ can be constructed in polynomial time and the computation of $\mu(G)$ is well known to be polynomial~\cite{Edmonds65}.
If $M=\{x_1,y_1\},\ldots,\{x_h,y_h\}$ and $|\overlap(x_i,y_i)|=w(\{x_i,y_i\})$ for $i\in\{1,\ldots,h\}$, then the string $s$ obtained by the consecutive concatenations with overlaps of   $x_1,y_1,\ldots,x_h,y_h$ and then (possibly) the remaining string of $S$ has the compression measure  at least $\mu(G)$. Hence, $\sum_{x\in S}|x|-\mu(G)$ is the upper bound for the length of the shortest superstring of $G$. We show that it is \classNP-hard to find a superstring that is shorter than this bound.

\begin{theorem}\label{thm:ss_NP_mat}
\textsc{Shortest Superstring} is \classNP-complete for $\ell=\sum_{x\in S}|x|-\mu(S)-1$ even if restricted to the alphabet $\Sigma=\{0,1\}$.
\end{theorem}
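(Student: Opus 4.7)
The problem lies in \classNP: a candidate superstring of length at most $\ell$ is a polynomial-size certificate, and $\mu(S)$ is polynomial-time computable by Edmonds' maximum weighted matching algorithm~\cite{Edmonds65}, so $\ell=\sum_{x\in S}|x|-\mu(S)-1$ itself is verifiable. For \classNP-hardness, the plan is to reduce from an \classNP-complete problem that produces overlap graphs whose maximum compression is sharply separated from $\mu(S)$: either they are equal, or the gap is at least $1$. A natural candidate is \textsc{Hamiltonian Path} on directed graphs (used also in Lemma~\ref{lem:trail}); one may alternatively reduce from \textsc{Long Trail}. The nontriviality of the question comes from the asymmetry between the \emph{symmetric} matching weight $w(\{x,y\})=\max\{|\overlap(x,y)|,|\overlap(y,x)|\}$ and the \emph{directed} overlap used by the compression: several incident positive edges of $G(S)$ may still fail to yield an ordering beating the matching, since each consecutive pair in an ordering uses only one of the two directions.

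\textbf{Construction sketch.} Given a directed graph $G=(V,E)$ with $V=\{v_1,\dots,v_n\}$, I would associate to each vertex $v_i$ a companion pair of binary strings $(a_i,b_i)$ sharing a large overlap of width $L$ through a common binary identifier $\tau_i$, fenced by long $0^k$/$1^k$ separator blocks in the spirit of the encoding used in Theorem~\ref{thm:no-kernel}. For each arc $(v_i,v_j)\in E$, engineer a controlled one-symbol directed overlap $|\overlap(b_i,a_j)|=1$ by placing a single matching bit at the right end of $b_i$ and the left end of $a_j$, while pinning all other directed overlaps to $0$. For $L$ sufficiently large, the maximum matching $\mu(S)$ is forced into the canonical pairing $\{\{a_i,b_i\}\}_{i=1}^n$ and equals $nL$, because any competing matching would sacrifice an edge of weight $L$ for an edge of weight at most $1$. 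An ordering $a_{\pi(1)}b_{\pi(1)}\cdots a_{\pi(n)}b_{\pi(n)}$ then has compression $nL+|\{i:(v_{\pi(i)},v_{\pi(i+1)})\in E\}|$.

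\textbf{Main obstacle.} The central difficulty, and the step I expect to be the hardest, is that in this naive form ``compression $\geq\mu(S)+1$'' is trivially equivalent to ``$E\neq\emptyset$'' and is not \classNP-hard. To genuinely encode an NP-hard property in the ``$+1$'' gap, the construction must make the bonus \emph{atomic} — available only when the ordering realises a full Hamiltonian path (or long trail) through all vertex gadgets. I would achieve this by adding an auxiliary ``witness'' mechanism, for instance a single extra pair of strings whose one-symbol directed overlap is activated only when every vertex gadget is threaded arc-consistently in a common chain; equivalently, $L$ and the per-arc bonuses can be calibrated so that each isolated arc overlap is already absorbed into $\mu(S)$ and only the cumulative effect of a Hamiltonian chain escapes into the compression. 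The forward direction (Hamiltonian path $\Rightarrow$ compression $\geq\mu(S)+1$) is then direct: translate the path into the obvious interleaved ordering. The backward direction is the technical core and uses an exchange argument in the spirit of Claim~$(*)$ in the proof of Theorem~\ref{thm:kernel}: any ordering achieving compression $\mu(S)+1$ can be normalised, without decreasing its compression, into the canonical interleaved form $a_{\pi(1)}b_{\pi(1)}\cdots a_{\pi(n)}b_{\pi(n)}$, from which a Hamiltonian path in $G$ is read off. Verifying the ``no spurious overlap'' and ``atomic bonus'' properties over the binary alphabet $\{0,1\}$, and closing all extremal case analyses, is where the bulk of the technical work lies.
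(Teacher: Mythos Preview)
Your diagnosis of the main obstacle is exactly right, but the proposal stops precisely where the real work begins: you never actually specify a mechanism that makes the ``$+1$'' atomic, and neither of your two suggested fixes is concrete enough to carry the argument. The ``witness pair'' idea is left entirely undescribed, and the alternative of ``absorbing isolated arc bonuses into $\mu(S)$'' is incompatible with your own claim that the maximum matching is the canonical pairing $\{\{a_i,b_i\}\}$ of total weight~$nL$; if arc edges were heavy enough to enter the matching, you would lose control over~$\mu(S)$ and over the structure of optimal orderings.

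The paper's construction resolves this via a different idea that your sketch does not contain. It attaches a pair of strings $(s_h,s_h')$ to each \emph{arc} $e_h$ (not to each vertex), and engineers the overlaps so that the two orders of each pair have \emph{different} values: $|\overlap(s_h,s_h')|=2(n-2)$ but $|\overlap(s_h',s_h)|=n-1$. The larger value fixes the matching ($\mu(S)=2(n-2)m$), while only the \emph{smaller} order allows chaining to the next arc via an overlap of $n-2$. Thus using a chain of $k$ arcs in the order $s_{i_1}'s_{i_1}\cdots s_{i_k}'s_{i_k}$ instead of the matching order changes the compression by $k(n-1)+(k-1)(n-2)-2k(n-2)=k-n+2$, which is positive iff $k\ge n-1$. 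The ``$+1$'' is therefore not an atomic bonus at all; it is the net of many gains and losses that becomes positive exactly when the trail reaches length $n-1$, reducing directly to \textsc{Long Trail} with $\ell=|V(G)|-1$ (Lemma~\ref{lem:trail}). Your vertex-based scheme has no such trade-off: the canonical order $a_ib_i$ already realises the full matching weight \emph{and} permits chaining, so there is nothing to balance. That trade-off is the missing idea.
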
 

\begin{proof}
We reduce \textsc{Long Trail} that was shown to be \classNP-complete in Lemma~\ref{lem:trail} for $\ell=|V(G)|-1$. Let $(G,\ell)$ be an instance of the problem, $n=|V(G)|=\ell+1$. We assume that $n\geq 2^6=64$. Let $V(G)=\{v_1,\ldots,v_n\}$ and $E(G)=\{e_1,\ldots,e_m\}$.
Let also $p=\lceil (n-1)/3\rceil$ and $q=n-1-2p$. Denote by $z=\text{'}01\ldots1\text{'}$ and $z^*=\text{'}1\ldots1\text{'}$ the strings of length $p$ such that the first symbol of $z$ is '0' and all the other symbols are '1'-s and $z^*$ is a strings of '1'-s.
For a positive integer $i\leq 2^{q-1}-1$, denote by $x_i$ the string of length $q-1$ that encodes $i$ in binary and by $y_i$ the string of length $q$ that encodes $2i$. 
Notice that $q\geq n/3-4$ and $\log n^2\leq q-3$, because $n\geq 2^6$. Hence, the first symbols of $x_i$ and $y_i$ are '0' if $i\leq n^2$. Observe also that the last symbol of each $y_i$ is '0'. 
For each $h\in\{1,\ldots,m\}$, we consider the arc $e_h=(v_i,v_j)$ of $G$ and construct two strings:
\begin{itemize}
\item $s_h=zy_hz^*zx_iz^*zx_jz^*$, 
\item $s_h'= zx_iz^*zx_jz^*zy_hz^*$.
\end{itemize}
We define $S=\{s_h,s_h'\mid 1\leq h\leq m\}$.

We need the following properties of the strings of $S$.
\begin{itemize}
 \item[i)] For $h\in\{1,\ldots,m\}$, $|\overlap(s_h,s_h')|=2(n-2)$ and $|\overlap(s_h',s_h)|=n-1$.
 \item[ii)] For distinct $h,h'\in\{1,\ldots,m\}$, $|\overlap(s_h,s'_{h'})|=n-2$ if the head of $e_h$ coincides with the tail of $e_{h'}$ and  $|\overlap(s_h,s'_{h'})|=0$ otherwise.
 \item[iii)] For distinct $h,h'\in\{1,\ldots,m\}$, $|\overlap(s'_h,s_{h'})|=|\overlap(s_h,s_{h'})|=|\overlap(s'_h,s'_{h'})|=0$.
\end{itemize}
These properties immediately follow from the definition of $s_h,s_h'$ and the facts that $|z|=|z^*|\geq |y_h| =|x_i|+1=|x_j|+1$, the strings $z,y_h,x_i,x_j$ start with '0', the last symbol of $y_h$ is \lq{}0',  and $z=\text{'}01\ldots1\text{'}$, $z^*=\text{'}1\ldots1\text{'}$. It is sufficient to notice that if the overlap of two strings is not empty, then the $p$-th prefix and suffix of the overlap is always $z$ and $z^*$ respectively.

Now we consider the weighted graph $G(S)$ and observe that $M=\{\{s_h,s_h'\}\mid 1\leq h\leq m\}$ is a maximum weight matching in $G(S)$ and $\mu(S)=2(n-2)m$ by (i)--(iii).

We claim that $G$ has a  trail of length at least $\ell=n-1$ if and only if $S$ has a superstring of length at most $\ell'=\sum_{x\in S}|x|-\mu(S)-1$.

Suppose that the sequence of arcs $e_{i_1},\ldots,e_{i_{\ell}}$ composes a trail in $G$. Let $\{e_{j_1},\ldots,e_{j_{m-\ell}}\}=S\setminus\{e_{i_1},\ldots,e_{i_{\ell}}\}$.
Consider $$s=s_{i_1}'\circ s_{i_1}\circ \ldots\circ s_{i_\ell}'\circ s_{i_\ell}\circ s_{j_1}\circ s_{j_1}'\circ\ldots\circ   s_{j_{m-\ell}}\circ s_{j_{m-\ell}}' .$$
Since $|\overlap(s_{i_h}',s_{i_h})|=n-1$ for $h\in\{1,\ldots,\ell\}$ by (i), $|\overlap(s_{i_{h-1}},s_{i_h}')|=n-2$ for $h\in\{2,\ldots,\ell\}$ by (ii) and
$|\overlap(s_{j_h}, s_{j_h}')|=2(n-2)$ by (i), the compression measure of $s$ is $t=(n-1)\ell+(n-2)(\ell-1)+ 2(n-2)(m-\ell)$ and 
$t-\mu(S)=(n-1)\ell-(n-2)(\ell+1)=1$. Hence, $s$ is a superstring of $S$ of length at most $\ell'$.

Assume that $s$ is a shortest superstring of $S$ and $|s|\leq \ell'$. By Lemma~\ref{lem:overlap}, we can assume that $s$ is obtained from a sequence $\sigma$ of the strings of $S$ by the concatenations with overlaps.

We show that for every $h\in\{1,\ldots,m\}$, either $s_h,s_h'$ or $s_h',s_h$ are consecutive in $\sigma$. To obtain a contradiction, assume first that for some $h\in\{1,\ldots,m\}$, $s_h$ occurs in $\sigma$ before $s_h'$ but these strings are not consecutive.  Let $a$ be the predecessor of $s_h$, $b$ be a predecessor of $s_h'$ and $c$ be a successor of $s_h'$ in $\sigma$; if $s_h$ is the first element of $\sigma$ or $s_h'$ is the last element, we assume that $a$ or $c$ is the empty string respectively. 
Then $|\overlap(a,s_h)|=|\overlap(s_h'),c|=0$ by (iii) and $|\overlap(b,s_h')|\leq n-2$ by (ii) and (iii). Consider the sequence 
$\sigma'$ obtained from $\sigma$ by the placement of $s_h'$ between $a$ and $s_h$. Because $|\overlap(s_h',s_h)|=n-1$ by (1), the string $s'$ obtained from $\sigma'$ by the concatenations with overlaps has length at  most $|s|-1$; a contradiction.  Suppose now that      
for some $h\in\{1,\ldots,m\}$, $s_h'$ occurs in $\sigma$ before $s_h$ but these strings are not consecutive. Let  $a$ be the successor of $s_h'$, $b$ be a predecessor of $s_h$ and $c$ be a successor of $s_h$ in $\sigma$; if $s_h$ is the  last element of $\sigma$, we assume that $c$ is the empty string. 
We have that  $|\overlap(s_h',a)|=|\overlap(b,s_h)|=0$ by (iii) and $|\overlap(s_h,c)|\leq n-2$ by (ii) and (iii). Consider the sequence 
$\sigma'$ obtained from $\sigma$ by the placement of $s_h$ between $s_h'$ and $a$. Because $|\overlap(s_h',s_h)|=n-1$ by (i), the string $s'$ obtained from $\sigma'$ by the concatenations with overlaps has length at  most $|s|-1$; a contradiction. 

We decompose $\sigma$ into inclusion maximal subsequences $\sigma_1,\ldots,\sigma_r$ such that the overlap between any two consecutive strings in each subsequence is not empty.
Because  either $s_h,s_h'$ or $s_h',s_h$ are consecutive in $\sigma$ for $h\in\{1,\ldots,m\}$ and  $|\overlap(s_h,s_h')|=2(n-2)$ and $|\overlap(s_h',s_h)|=n-1$ by (i), each pair $s_h,s_h'$ is in the same subsequence. In particular, it means that the number of elements in each subsequence is even. Let $n_i$ be the size of $\sigma_i$ and let $w_i$ be the string obtained by the concatenation with overlaps from $\sigma_i$ for $i\in\{1,\ldots,r\}$. Because $n_1+\ldots+n_r=2m$, $|M|=m$ and the compression measure of $s$ is at least $\mu(S)+1$, 
there is $i\in\{1,\ldots,r\}$ such that the compression measure $\alpha$ of $w_i$ is at least $n_i/2\cdot\mu(S)/m+1=n_i(n-2)+1$.

Suppose that $s_h,s_h'$ are in $\sigma_i$ for some $h\in\{1,\ldots,m\}$. Then they are consecutive. If $s_h$ has a predecessor $a$ in $\sigma$, then $|\overlap(a,s_h)|=0$, and if 
$s_h'$ has a successor $b$ in $\sigma$, then $|\overlap(s_h',b)|=0$ by (iii). Hence $\sigma_i=s_h,s_h'$ and $n_i=2$ in this case, but then by (i), $\alpha=2(n-2)<n_i(n-1)/2+1$; a contradiction. It follows that $w_i=s_{i_1}'\circ s_{i_1}\circ\ldots\circ  s_{i_k}'\circ s_{i_k}$, where distinct $i_1,\ldots,i_k\in\{1,\ldots,m\}$ and $k=n_i/2$. 
Since for $j\in\{2,\ldots,k\}$, the overlap between $s_{i_{j-1}}$ and $s_{i_j}'$ is not empty, $|\overlap(s_{i_{j-1}},s_{i_j}')|=n-2$ and the head of the arc $e_{i_{j-1}}$ is the tail of $e_{i_j}$.
Hence, $e_{i_1},\ldots,e_{i_k}$ is a  trail in $G$. By (i) and (ii), we have that $\alpha=k(n-1)+(k-1)(n-2)\geq 2k(n-2)+1$. Therefore, $k\geq n-1$, i.e., $G$ has a  trail of length at least $\ell=n-1$. 
\end{proof}

\bibliographystyle{splncs03}
\bibliography{superstring}

\begin{thebibliography}{10}
\providecommand{\url}[1]{\texttt{#1}}
\providecommand{\urlprefix}{URL }

\bibitem{concordeTSP}
{Concorde TSP Solver}. \url{http://www.math.uwaterloo.ca/tsp/concorde.html}

\bibitem{AlonYZ95}
Alon, N., Yuster, R., Zwick, U.: Color-coding. J. {ACM}  42(4),  844--856
  (1995)

\bibitem{bellman}
Bellman, R.: Dynamic programming treatment of the travelling salesman problem.
  Journal of the ACM (JACM)  9(1),  61--63 (1962)

\bibitem{BodlaenderJK14}
Bodlaender, H.L., Jansen, B.M.P., Kratsch, S.: Kernelization lower bounds by
  cross-composition. {SIAM} J. Discrete Math.  28(1),  277--305 (2014)

\bibitem{BulteauHKN14}
Bulteau, L., H{\"{u}}ffner, F., Komusiewicz, C., Niedermeier, R.: Multivariate
  algorithmics for {NP}-hard string problems. Bulletin of the {EATCS}  114
  (2014), \url{http://eatcs.org/beatcs/index.php/beatcs/article/view/310}

\bibitem{DowneyF13}
Downey, R.G., Fellows, M.R.: Fundamentals of Parameterized Complexity. Texts in
  Computer Science, Springer (2013),
  \url{http://dx.doi.org/10.1007/978-1-4471-5559-1}

\bibitem{Edmonds65}
Edmonds, J.: Maximum matching and a polyhedron with {$0,1$}-vertices. J. Res.
  Nat. Bur. Standards Sect. B  69B,  125--130 (1965)

\bibitem{EvansW11}
Evans, P.A., Wareham, T.: Efficient restricted-case algorithms for problems in
  computational biology. In: Algorithms in Computational Molecular Biology:
  Techniques, Approaches and Applications, pp. 27--49. Wiley Series in
  Bioinformatics, Wiley (2011)

\bibitem{FlumG06}
Flum, J., Grohe, M.: Parameterized complexity theory. Texts in Theoretical
  Computer Science. An EATCS Series, Springer-Verlag, Berlin (2006)

\bibitem{GallantMS80}
Gallant, J., Maier, D., Storer, J.A.: On finding minimal length superstrings.
  J. Comput. Syst. Sci.  20(1),  50--58 (1980)

\bibitem{GareyJ79}
Garey, M.R., Johnson, D.S.: Computers and Intractability: {A} Guide to the
  Theory of NP-Completeness. W. H. Freeman (1979)

\bibitem{golovnev2013solving}
Golovnev, A., Kulikov, A.S., Mihajlin, I.: Solving 3-superstring in $3^{n/3}$
  time. In: Mathematical Foundations of Computer Science 2013, pp. 480--491.
  Springer (2013)

\bibitem{golovnev2014solving}
Golovnev, A., Kulikov, A.S., Mihajlin, I.: Solving scs for bounded length
  strings in fewer than $2^n$ steps. Information Processing Letters  114(8),
  421--425 (2014)

\bibitem{held}
Held, M., Karp, R.M.: A dynamic programming approach to sequencing problems.
  Journal of the Society for Industrial \& Applied Mathematics  10(1),
  196--210 (1962)

\bibitem{karp1982dynamic}
Karp, R.M.: Dynamic programming meets the principle of inclusion and exclusion.
  Operations Research Letters  1(2),  49--51 (1982)

\bibitem{KnuthMP77}
Knuth, D.E., Jr., J.H.M., Pratt, V.R.: Fast pattern matching in strings. {SIAM}
  J. Comput.  6(2),  323--350 (1977)

\bibitem{kohn1977}
Kohn, S., Gottlieb, A., Kohn, M.: A generating function approach to the
  traveling salesman problem. In: Proceedings of the 1977 annual conference.
  pp. 294--300. ACM (1977)

\bibitem{mucha}
Mucha, M.: Lyndon words and short superstrings. In: Proceedings of the
  Twenty-Fourth Annual ACM-SIAM Symposium on Discrete Algorithms. pp. 958--972.
  SIAM (2013)

\bibitem{NaorSS95}
Naor, M., Schulman, L.J., Srinivasan, A.: Splitters and near-optimal
  derandomization. In: FOCS. pp. 182--191. IEEE Computer Society (1995)

\bibitem{Niedermeierbook06}
Niedermeier, R.: Invitation to fixed-parameter algorithms, Oxford Lecture
  Series in Mathematics and its Applications, vol.~31. Oxford University Press,
  Oxford (2006)

\end{thebibliography}

\end{document}